\renewcommand{\geq}{\geqslant}
\renewcommand{\leq}{\leqslant}
\renewcommand{\tilde}{\widetilde}
\newcommand{\defi}{\stackrel{\textbf{def}}{=}}
\newcommand{\lat}[1]{{\it #1}}
\newcommand{\fail}{\mathsf{Fail}}
\newcommand{\rr}{\mathsf{Eligible}}
\newcommand{\act}{\mathsf{Active}}
\newcommand{\run}{\mathsf{Run}}
\newcommand{\states}[1]{\ensuremath{\mathsf{States}\left(#1\right)}}
\newcommand{\Post}[1]{\ensuremath{\mathsf{Succ}\left(#1\right)}}
\newcommand{\MaxEltname}[1]{\ensuremath{\mathsf{Max}^{#1}}}
\newcommand{\MaxElt}[2]{\ensuremath{\MaxEltname{#1}\left(#2\right)}}
\newcommand{\tR}{\ensuremath{\tilde{R}}}
\newcommand{\Sbar}{\ensuremath{\overline{S}}}
\newcommand{\R}{\mathbin{R}}
\newcommand{\fsim}{\succcurlyeq}
\newcommand{\fsimstrict}{\succ}
\newcommand{\idlesim}{\succcurlyeq_{idle}}
\DeclareMathOperator{\nat}{nat}
\DeclareMathOperator{\rct}{rct}
\DeclareMathOperator{\lax}{laxity}
\DeclareMathOperator{\ttd}{ttd}
\theoremstyle{definition}
\newtheorem{definition}{Definition}
\theoremstyle{plain}
\newtheorem{lemma}[definition]{Lemma}
\newtheorem{theorem}[definition]{Theorem}
\author{Markus Lindström \hspace{3.5em} Gilles Geeraerts \hspace{3.5em} Joël Goossens \\\\
Université libre de Bruxelles\\
Département d'Informatique, Faculté des Sciences\\
Avenue Franklin D. Roosevelt 50, CP 212\\
1050 Bruxelles, Belgium\\
\{mlindstr, gilles.geeraerts, joel.goossens\}@ulb.ac.be
}
\title{A faster exact multiprocessor schedulability test for sporadic tasks}
\date{}
\begin{document}
\maketitle
\thispagestyle{empty}
\begin{abstract}
  Baker and Cirinei introduced an exact but naive
  algorithm~\cite{Baker2007}, based on solving a state reachability
  problem in a finite automaton, to check whether sets of sporadic
  hard real-time tasks are schedulable on identical multiprocessor
  platforms. However, the algorithm suffered from poor performance due
  to the exponential size of the automaton relative to the size of the
  task set. In this paper, we successfully apply techniques developed
  by the formal verification community, specifically antichain
  algorithms \cite{Doyen2010}, by defining and proving the correctness
  of a simulation relation on Baker and Cirinei's automaton. We show
  our improved algorithm yields dramatically improved performance for
  the schedulability test and opens for many further improvements.
\end{abstract}

\Section{Introduction}
\label{introduction}

In this research we consider the schedulability problem of hard real-time sporadic constrained deadline task systems upon identical multiprocessor platforms. Hard real-time systems are systems where tasks are not only required to provide correct computations but are also require to adhere to strict deadlines \cite{Liu1973}.

Devising an exact schedulability criterion for sporadic task sets on multiprocessor platforms has so far proven difficult due to the fact that there is no known worst case scenario (nor critical instant). It was notably shown in~\cite{Goossens2002} that the periodic case is not necessarily the worst on multiprocessor systems. In this context, the real-time community has mainly been focused on the development of \emph{sufficient} schedulability tests that correctly identify all unschedulable task sets, but may misidentify some schedulable systems as being unschedulable~\cite{Baker2006} using a given platform and scheduling policy (see e.g.~\cite{Bertogna2011, Baruah2007}).

Baker and Cirinei introduced the first correct algorithm~\cite{Baker2007} that verified \emph{exactly} whether a sporadic task system was schedulable on an identical multiprocessor platform by solving a reachability problem on a finite state automaton using a naive brute-force algorithm, but it suffered from the fact that the number of states was exponential in the size of the task sets and its periods, which made the algorithm intractable even for small task sets with large enough periods.

In this paper, we apply techniques developed by the formal verification community, specifically Doyen, Raskin \lat{et al.}~\cite{Doyen2010, DeWulf2006} who developed faster algorithms to solve the reachability problem using algorithms based on data structures known as {\em antichains}. Their method has been shown to be provably better~\cite{Doyen2010} than naive state traversal algorithms such as those used in~\cite{Baker2007} for deciding reachability from a set of initial states to a given set of final states.

An objective of this work is to be as self-contained as possible to allow readers from the real-time community to be able to fully understand the concepts borrowed from the formal verification community. We also hope our work will kickstart a ``specialisation'' of the methods presented herein within the realm of real-time scheduling, thus bridging the two communities.

\paragraph{Related work.} This work is not the first contribution to
apply techniques and models first proposed in the setting of formal
verification to real-time scheduling. In the field of operational
research, Abdeddaïm and Maler have studied the use of stopwatch
automata to solve job-shop scheduling problems
\cite{Maler2002}. Cassez has recently exploited game theory,
specifically timed games, to bound worst-case execution times on
modern computer architectures, taking into account caching and
pipelining~\cite{Cassez2011}. Fersman \lat{et al.}\ have studied a similar
problem and introduced task automata which assume continuous time \cite{Fersman2007},
whereas we consider discrete time in our work. They showed that, given selected constraints, schedulability could be undecidable in their model. Bonifaci and Marchetti-Spaccamela have studied the related problem of feasibility of multiprocessor sporadic systems in \cite{Bonifaci2010} and have established an upper bound on its complexity.


\paragraph{This research.}

We define a restriction to constrained deadlines (systems where the relative deadline of tasks is no longer than their minimal interarrival time) of Baker and Cirinei's automaton in a more formal way than in~\cite{Baker2007}. We also formulate various scheduling policy properties in the framework of this automaton such as memorylessness.

Our main contribution is the design and proof of correctness of a non-trivial \emph{simulation relation} on the automaton, required to successfully apply a generic algorithm developed in the formal verification community, known as an \emph{antichain algorithm} to Baker and Cirinei's automaton to prove or disprove the schedulability of a given sporadic task system.

Finally, we will show through implementation and experimental analysis that our proposed algorithm outperforms Baker and Cirinei's original brute-force algorithm.

\paragraph{Paper organization.}

Section~\ref{section:problem} defines the real-time scheduling problem we are focusing on, i.e.\ devising an exact schedulability test for sporadic task sets on identical multiprocessor platforms. Section~\ref{section:automaton} will formalize the model (a non-deterministic automaton) we will use to describe the problem and we formulate how the schedulability test can be mapped to a reachability problem in this model. We also formalize various real-time scheduling concepts in the framework of our formal model.

Section~\ref{section:reachability} then discusses how the reachability problem can be solved. We present the classical breadth-first algorithm used in~\cite{Baker2007} and we introduce an improved algorithm that makes use of techniques borrowed from the formal verification community~\cite{Doyen2010}. The algorithm requires coarse \emph{simulation relations} to work faster than the standard breadth-first algorithm. Section~\ref{section:idlesim} introduces the \emph{idle tasks simulation relation} which can be exploited by the aforementioned algorithm. 

Section~\ref{section:experimental} then showcases experimental results comparing the breadth-first and our improved algorithm using the aforementioned simulation relation, showing that our algorithm outperforms the naive one. Section~\ref{section:conclusions} concludes our work. Appendix~\ref{appendix:lemma} gives a detailed proof of a lemma we use in Section~\ref{section:reachability}.

\Section{Problem definition} 
\label{section:problem}

We consider an identical multiprocessor platform with $m$ processors and a sporadic task set $\tau = \{ \tau_1, \tau_2, \ldots, \tau_n \}$. Time is assumed to be discrete. A sporadic task $\tau_i$ is characterized by a {\em minimum interarrival time} $T_i > 0$, a {\em relative deadline} $D_i > 0$ and a {\em worst-case execution time} (also written WCET) $C_i > 0$. A sporadic task $\tau_i$ submits a potentially infinite number of jobs to the system, with each request being separated by at least $T_i$ units of time. We will assume jobs are not parallel, i.e.\ only execute on one single processor (though it may migrate from a processor to another during execution). We also assume jobs are independent. We wish to establish an \emph{exact} schedulability test for any sporadic task set $\tau$ that tells us whether the set is schedulable on the platform with a given deterministic, predictable and preemptive scheduling policy. In the remainder of this paper, we will assume we only work with {\em constrained deadline} systems (i.e.\ where $\forall \tau_i : D_i \leq T_i$) which embody many real-time systems in practice.

\Section{Formal definition of the Baker-Cirinei automaton}
\label{section:automaton}

Baker and Cirinei's automaton as presented in~\cite{Baker2007} models the evolution of an \emph{arbitrary}
deadline sporadic task set (with a FIFO policy for jobs of a given task) scheduled on an identical multiprocessor
platform with $m$ processors. In this paper, we focus on constrained
deadline systems as this hypothesis simplifies the definition of the
automaton. We expect to analyze
Baker and Cirinei's more complete construct in future
works.

The model presented herein allows use of \emph{preemptive}, \emph{deterministic} and \emph{predictable} scheduling policies. It can, however, be generalized to model broader classes of schedulers. We will discuss this aspect briefly in Section~\ref{section:conclusions}.

\begin{definition}
  An \emph{automaton} is a tuple $A=\langle V, E, S_0, F\rangle$,
  where $V$ is a finite set of \emph{states}, $E\subseteq V\times V$ is
  the set of \emph{transitions}, $S_0\in V$ is the \emph{initial state}
  and $F\subseteq V$ is a set of \emph{target states}.
\end{definition}

The problem on automata we are concerned with is that of
\emph{reachability} (of target states).  A \emph{path} in an automaton
$A=\langle V, E, S_0, F\rangle$ is a finite sequence
$v_1,\ldots,v_\ell$ of states s.t.\  for all $1\leq i\leq \ell-1$:
$(v_i,v_{i+1})\in E$. Let $V'\subseteq V$ be a set of states of $A$. If
there exists a path $v_1,\ldots,v_\ell$ in $A$ s.t.\  $v_\ell\in V'$, we
say that \emph{$v_1$ can reach $V'$}. Then, the \emph{reachability
  problem} asks, given an automaton $A$ whether the initial state
$S_0$ can reach the set of target states $F$.

Let $\tau = \{ \tau_1, \tau_2, \ldots, \tau_n \}$ be a set of sporadic tasks and $m$ be a number of processors. This section is devoted to
explaining how to model the behaviour of such a system by means of an
automaton $A$, and how to reduce the schedulability problem of $\tau$ on
$m$ processors to an instance of the reachability problem in $A$. At
any moment during the execution of such a system, the information we
need to retain about each task $\tau_i$ are: $(i)$ the \emph{earliest
  next arrival time} $\nat(\tau_i)$ relative to the current instant
and $(ii)$ the remaining processing time $\rct(\tau_i)$ of the
currently ready job of $\tau_i$. Hence the definition of \emph{system
  state}:

\begin{definition}[System states]
\label{def:nat+rct}
Let $\tau = \{ \tau_1, \tau_2, \ldots, \tau_n \}$ be a set of sporadic tasks.  A \emph{system state} of $\tau$ is a tuple
$S=\langle\nat_S,\rct_S\rangle$ where $\nat_S$ is a function from
$\tau$ to $\{0,1,\ldots T_{\max}\}$ where $T_{\max}\defi\max_i T_i$, and $\rct_S$ is a
function from $\tau$ to $\{0,1,\ldots, C_{\max}\}$, where $C_{\max} \defi \max_iC_i$. We
denote by $\states{\tau}$ the set of all system states of $\tau$.
\end{definition}

In order to define the set of transitions of the automaton, we need to
rely on ancillary notions:
\begin{definition}[Eligible task]
\label{def:eligible}
A task $\tau_i$ is \emph{eligible} in the state $S$ if it can submit a
job (i.e.\ if and only if the task does not currently have an active
job and the last job was submitted at least $T_i$ time units ago) from
this configuration. Formally, the set of eligible tasks in state $S$ is: 
\begin{eqnarray*}
\rr(S)&\defi&\{\tau_i\mid \nat_S(\tau_i) = \rct_S(\tau_i) = 0\}
\end{eqnarray*}
\end{definition}

\begin{definition}[Active task]
\label{def:active}
A task is \emph{active} in state $S$ if it currently has a job that
has not finished in $S$. Formally, the set of active tasks in $S$ is:
\begin{eqnarray*}
  \act(S)&\defi&\{\tau_i\mid \rct_S(\tau_i) > 0\}
\end{eqnarray*}
A task that is not active in $S$ is said to be {\em idle} in $S$.
\end{definition}




\begin{definition}[Laxity \cite{Baker2007}]
\label{def:laxity}
The laxity of a task $\tau_i$ in a system state $S$ is:
\[
\lax_S(\tau_i) \defi\nat_S(\tau_i) - (T_i - D_i) -\rct_S(\tau_i)
\]
\end{definition}

\begin{definition}[Failure state]
\label{def:fail}
A state $S$ is a \emph{failure state} iff the laxity of at least one
task is negative in $S$. Formally, the set of failure states on $\tau$ is:
\begin{eqnarray*}
  \fail_\tau&\defi&\{S\mid\exists \tau_i\in\tau:\lax_S(\tau_i) < 0\}
\end{eqnarray*}
\end{definition}

Thanks to these notions we are now ready to explain how to build the
transition relation of the automaton that models the behaviour of
$\tau$. For that purpose, we first choose a
\emph{scheduler}. Intuitively, a scheduler is a
\emph{function}\footnote{Remark that by modeling the scheduler as a
  function, we restrict ourselves to \emph{deterministic schedulers}.}
$\run$ that maps each state $S$ to a set of at most $m$ active tasks
$\run(S)$ to be run:
\begin{definition}[Scheduler]\label{def:sched}
  A (deterministic) \emph{scheduler} for $\tau$ on $m$ processors is
  a function $\run:\states{\tau} \to 2^\tau$ s.t.\ for all $S$:
  $\run(S)\subseteq \act(S)$ and $0\leq |\run(S)|\leq m$. Moreover:
  \begin{enumerate}\item 
    $\run$ is \emph{work-conserving} iff for all $S$, $|\run(S)|=
    \min\{m, |\act(S)|\}$
  \item $\run$ is \emph{memoryless} iff for all
    $S_1,S_2\in\states{\tau}$ with $\act(S_1)=\act(S_2)$:
    $$
    \begin{array}{c}
        \forall \tau_i \in \act(S_1) : 
        \left( \begin{array}{l}
            \nat_{S_1}(\tau_i) =\nat_{S_2}(\tau_i)\\  
            \land\rct_{S_1}(\tau_i) =\rct_{S_2}(\tau_i) \end{array} 
        \right)\\
        \textrm{implies } \run(S_1) = \run(S_2)
      \end{array}
$$
  \end{enumerate}
\end{definition}
Intuitively, the work-conserving property implies that the scheduler
always exploits as many processors as available. The memoryless
property implies that the decisions of the scheduler are not affected
by tasks that are idle and that the scheduler does not consider
the past to make its decisions.

As examples, we can formally define the preemptive global DM and EDF schedulers.

\begin{definition}[Preemptive global DM scheduler] Let $\ell \defi \min\{m,|\act(S)|\}$. Then, $\run_{\text{DM}}$ is a function that computes $\run_{\text{DM}}(S) \defi \{\tau_{i_1}, \tau_{i_2}, \ldots, \tau_{i_\ell}\}$ s.t. for all $1 \leq j \leq \ell$ and for all $\tau_k$ in $\act(S) \setminus \run_{\text{DM}}(S)$, we have $D_k > D_{i_j}$ or $D_k = D_{i_j} \land k > i_j$.
\end{definition}

\begin{definition}[Preemptive global EDF scheduler] Let $\ttd_S(\tau_i) \defi \nat_S(\tau_i) - (T_i-D_i)$ be the time remaining before the absolute deadline of the last submitted job~\cite{Baker2007} of $\tau_i \in \act(S)$ in state $S$. Let $\ell \defi \min\{m,|\act(S)|\}$. Then, $\run_{\text{EDF}}$ is a function that computes $\run_{\text{EDF}}(S) \defi \{\tau_{i_1}, \tau_{i_2}, \ldots, \tau_{i_\ell}\}$ s.t.\ for all $1 \leq j \leq \ell$ and for all $\tau_k$ in $\act(S) \setminus \run_{\text{EDF}}(S)$, we have $\ttd_S(\tau_k) > \ttd_S(\tau_{i_j})$ or $\ttd_S(\tau_k) = \ttd_S(\tau_{i_j}) \land k > i_j$.
\end{definition}

By Definition~\ref{def:sched}, global DM and EDF are thus work-conserving and it can also be verified that they are memoryless. In \cite{Baker2007}, suggestions to model several other schedulers were presented. It was particularily shown that adding supplementary information to system states could allow broader classes of schedulers to be used. Intuitively, states could e.g.\ keep track of what tasks were executed in their predecessor to implement non-preemptive schedulers.

Clearly, in the case of the scheduling of sporadic tasks, two types of
events can modify the current state of the system:
\begin{enumerate}
\item {\em Clock-tick transitions} model the elapsing of time for one
  time unit, i.e.\ the execution of the scheduler and the running of
  jobs.
\item {\em Request transitions} (called \emph{ready transitions}
  in~\cite{Baker2007}) model requests from sporadic tasks at a given
  instant in time.
\end{enumerate}

Let $S$ be a state in $\states{\tau}$, and let $\run$ be a
scheduler. Then, letting one time unit elapse from $S$ under the
scheduling policy imposed by $\run$ amounts to decrementing the $\rct$
of the tasks in $\run(S)$ (and only those tasks), and to decrementing
the $\nat$ of all tasks. Formally:
\begin{definition}\label{def:clocktick}
  Let $S=\langle\nat_S,\rct_S\rangle\in\states{\tau}$ be a system
  state and $\run$ be a scheduler for $\tau$ on $m$ processors. Then,
  we say that $S^+=\langle\nat_S^+,\rct_S^+\rangle$ is a \emph{clock-tick
  successor} of $S$ under $\run$, denoted
  $S\xrightarrow{\run}S^+$ iff:
  \begin{enumerate}
  \item for all $\tau_i\in\run(S)$: $\rct_S^+(\tau_i)=\rct_S(\tau_i)-1$ ;
  \item for all $\tau_i\not\in\run(S)$: $\rct_S^+(\tau_i)=\rct_S(\tau_i)$ ;
  \item for all $\tau_i\in\tau$:
    $\nat_S^+(\tau_i)=\max\{\nat_S(\tau_i)-1,0\}$.
  \end{enumerate}
\end{definition}

Let $S$ be a state in $\states{\tau}$. Intuitively, when the system is
in state $S$, a request by some task $\tau_i$ for submitting a new job
has the effect to update $S$ by setting $\nat(\tau_i)$ to $T_i$ and
$\rct(\tau_i)$ to $C_i$. This can be generalised to sets of
tasks. Formally:

\begin{definition}\label{def:reqtrans}
  Let $S\in\states{\tau}$ be a system state and let
  $\tau'\subseteq\rr(S)$ be a set of tasks that are eligible to submit
  a new job in the system. Then, we say that $S'$ is a \emph{$\tau'$-request
  successor} of $S$, denoted $S\xrightarrow{\tau'}S'$, iff:
  \begin{enumerate}
  \item for all $\tau_i\in \tau'$: $\nat_{S'}(\tau_i)=T_i$ and
    $\rct_{S'}(\tau_i)=C_i$
  \item for all $\tau_i\in\tau\setminus\tau'$:
    $\nat_{S'}(\tau_i)=\nat_S(\tau_i)$ and
    $\rct_{S'}(\tau_i)=\rct_S(\tau_i)$.
  \end{enumerate}
\end{definition}
Remark that we allow $\tau'=\emptyset$ (that is, no task asks to
submit a new job in the system).

We are now ready to define the automaton $A(\tau, \run)$ that
formalises the behavior of the system of sporadic tasks $\tau$, when
executed upon $m$ processors under a scheduling policy $\run$:
\begin{definition}\label{def:automaton}
  Given a set of sporadic tasks $\tau$ and a scheduler $\run$ for
  $\tau$ on $m$ processors, the automaton $A(\tau, \run)$ is the
  tuple $\langle V, E, S_0,F\rangle$ where:
  \begin{enumerate}
  \item $V=\states{\tau}$
  \item $(S_1,S_2)\in E$ iff there exists $S'\in \states{\tau}$ and
    $\tau'\subseteq \tau$
    s.t.\ $S_1\xrightarrow{\tau'}S'\xrightarrow{\run}S_2$.
  \item $S_0=\langle\nat_0,\rct_0\rangle$ where for all
    $\tau_i\in\tau$, $\nat_0(\tau_i)=\rct_0(\tau_i)=0$.
  \item $F=\fail_\tau$
  \end{enumerate}
\end{definition}

Figure~\ref{fig:simulation} illustrates a possible graphical
representation of one such automaton, which will be 
analyzed further in Section~\ref{section:idlesim}.
On this example, the automaton depicts the following EDF-schedulable
sporadic task set using an EDF scheduler and assuming $m=2$:

\[
\begin{array}{c|ccc}
& T_i & D_i & C_i\\\hline
\tau_1 & 2 & 2 & 1\\
\tau_2 & 3 & 3 & 2
\end{array}
\]

System states are represented by nodes. For the purpose of saving space,
we represent a state $S$ with the $[\alpha\beta,\gamma\delta]$
format, meaning $\nat_S(\tau_1) = \alpha$, $\rct_S(\tau_1) = \beta$,
$\nat_S(\tau_2) = \gamma$ and $\rct_S(\tau_2) = \delta$.  We explicitly
represent clock-tick transitions by edges labelled with $\run$, and
$\tau'$-request transitions by edges labelled with $\tau'$. $\tau'=\emptyset$
loops are implicit on each state. Note that, in accordance with
Definition~\ref{def:automaton}, there are no successive $\tau'$-request
transitions, and there are thus no such transitions from states such as
$[21,00]$ and $[00,32]$. Also note that the automaton indeed models the
evolution of a sporadic system, of which the periodic case is one possible
path (the particular case of a synchronous system is found by taking the maximal $\tau'$-request transition whenever possible, starting from $[00,00]$).

We remark that our definition deviates slightly from that of Baker and
Cirinei. In our definition, a path in the automaton corresponds to an
execution of the system that alternates between requests transitions
(possibly with an empty set of requests) and clock-tick
transitions. In their work \cite{Baker2007}, Baker and Cirinei allow
any sequence of clock ticks and requests, but restrict each request to
a single task at a time. It is easy to see that these two definitions
are equivalent. A sequence of $k$ clock ticks in Baker's automaton
corresponds in our case to a path $S_1,S_2,\ldots S_{k+1}$ s.t.\ for
all $1\leq i\leq k$:
$S_i\xrightarrow{\emptyset}S_i\xrightarrow{\run}S_{i+1}$. A maximal
sequence of successive requests by $\tau_1,\tau_2,\ldots,\tau_k$,
followed by a clock tick corresponds in our case to a single edge
$(S_1,S_2)$
s.t.\ $S_1\xrightarrow{\{\tau_1,\ldots,\tau_k\}}S'\xrightarrow{\run}S_2$
for some $S'$. Conversely, each edge $(S_1,S_2)$ in $A(\tau, \run)$
s.t.\ $S_1\xrightarrow{\tau'}S'\xrightarrow{\run}S_2$, for some state
$S'$ and set of tasks $\tau'=\{\tau_1,\ldots,\tau_k\}$, corresponds to
a sequence of successive requests\footnote{Remark that the order does
  not matter.} by $\tau_1$,\ldots, $\tau_k$ followed by a clock tick
in Baker's setting.


The purpose of the definition of $A(\tau,\run)$ should now be clear to
the reader. Each possible execution of the system corresponds to a
path in $A(\tau,\run)$ and vice-versa. States in $\fail_\tau$
correspond to states of the system where a deadline will unavoidably
be missed. Hence, \emph{the set of sporadic tasks $\tau$ is feasible
  under scheduler $\run$ on $m$ processors iff $\fail_\tau$ is not
  reachable in $A(\tau,\run)$} \cite{Baker2007}. Unfortunately, the
number of states of $A(\tau,\run)$ can be intractable even for very small sets of
tasks $\tau$. In the next section we present generic techniques to
solve the reachability problem in an efficient fashion, and apply them
to our case. Experimental results given in
Section~\ref{section:experimental} demonstrate the practical interest of
these methods.

\Section{Solving the reachability problem}
\label{section:reachability}

Let us now discuss techniques to solve the reachability problem. Let
$A=\langle V, E, S_0,F\rangle$ be an automaton. For any $S\in V$, let
$\Post{S}=\{S'\mid (S,S')\in E\}$ be the set of one-step successors of
$S$. For a set of states $R$, we let $\Post{R}=\cup_{S\in
  R}\Post{S}$. Then, solving the reachability problem on $A$ can be
done by a \emph{breadth-first traversal} of the automaton, as shown in
Algorithm~\ref{algo:bf}.

\begin{algorithm}
  \Begin {
    $i\leftarrow 0$ \;
    $R_0\leftarrow \{S_0\}$ \;
    \Repeat{$R_i=R_{i-1}$}{
      $i\leftarrow i+1$ \;
      $R_i\leftarrow R_{i-1}\cup\Post{R_{i-1}}$ \;
      \lIf{$R_i\cap F\neq\emptyset$}{ \Return{\texttt{Reachable}} \;}
    }
    \Return{\texttt{Not reachable}} \;
  }
  \caption{Breadth-first traversal.\label{algo:bf}}
\end{algorithm}
Intuitively, for all $i\geq 0$, $R_i$ is the set of states that are
reachable from $S_0$ in $i$ steps at most. The algorithm computes the
sets $R_i$ up to the point where $(i)$ either a state from $F$ is met
or $(ii)$ the sequence of $R_i$ stabilises because no new states have
been discovered, and we declare $F$ to be unreachable. This algorithm
always terminates and returns the correct answer. Indeed, either $F$
is reachable in, say $k$ steps, and then $R_k\cap F\neq \emptyset$,
and we return `\texttt{Reachable}'. Or $F$ is not reachable, and the
sequence eventually stabilises because $R_0\subseteq R_1\subseteq
R_2\subseteq\cdots\subseteq V$, and $V$ is a finite set. Then, we exit
the loop and return `\texttt{Not reachable}'. Remark that this
algorithm has the advantage that the whole automaton does not need be
stored in memory before starting the computation, as
Definition~\ref{def:clocktick} and Definition~\ref{def:reqtrans} allow
us to compute $\Post{S}$ \emph{on the fly} for any state
$S$. Nevertheless, in the worst case, this procedure needs to explore
the whole automaton and is thus in $\mathcal{O}(|V|)$ which can be too
large to handle in practice \cite{Baker2007}.

Equipped with such a simple definition of \emph{automaton}, this is
the best algorithm we can hope for. However, in many practical cases,
the set of states of the automaton is endowed with a \emph{strong semantic} that can be exploited to speed up
Algorithm~\ref{algo:bf}. In our case, states are tuples of integers
that characterise sporadic tasks running in a system. To harness this
information, we rely on the formal notion of \emph{simulation}:

\begin{definition}\label{def:simu}
  Let $A=\langle V, E, S_0,F\rangle$ be an automaton. A \emph{simulation
    relation} for $A$ is a preorder $\fsim\subseteq V\times V$ s.t.:
  \begin{enumerate}
  \item For all $S_1$, $S_2$, $S_3$ s.t.\ $(S_1,S_2)\in E$ and
    $S_3\fsim S_1$, there exists $S_4$ s.t.\ $(S_3,S_4)\in E$ and
    $S_4 \fsim S_2$.
  \item For all $S_1$, $S_2$ s.t.\ $S_1\fsim S_2$: $S_2\in F$
    implies $S_1\in F$.
  \end{enumerate}
  Whenever $S_1\fsim S_2$, we say that $S_1$ \emph{simulates}
  $S_2$. Whenever $S_1\fsim S_2$ but $S_2\not\fsim S_1$, we write
  $S_1\fsimstrict S_2$.
\end{definition}
Intuitively, this definition says that whenever a state $S_3$ simulates
a state $S_1$, then $S_3$ can \emph{mimick} every possible move of
$S_1$ by moving to a similar state: for every edge $(S_1,S_2)$, there
is a corresponding edge $(S_3,S_4)$, where $S_4$ simulates
$S_2$. Moreover, we request that a \emph{target state} can only be
simulated by a target state. Remark that for a given automaton there
can be several simulation relations (for instance, equality is always
a simulation relation). 

The key consequence of this definition is that \textbf{if} $S_2$ is a
state that can reach $F$, and if $S_1\fsim S_2$ \textbf{then}
\emph{$S_1$ can reach $F$ too}. Indeed, if $S_2$ can reach $F$, there
is a path $v_0,v_1,\ldots, v_n$ with $v_0=S_2$ and $v_n\in F$. Using
Definition~\ref{def:simu} we can inductively build a path
$v_0',v_1',\ldots, v_n'$ s.t.\ $v_0'=S_1$ and $v_i'\fsim v_i$ for all
$i\geq 0$. Thus, in particular $v_n'\fsim v_n\in F$, hence $v_n'\in F$
by Definition~\ref{def:simu}. This means that $S_1$ can reach $F$ too.
Thus, when we compute two states $S_1$ and $S_2$ with $S_1\fsim S_2$,
at some step of Algorithm~\ref{algo:bf}, we \emph{do not need to
  further explore the successors of $S_2$}. Indeed,
Algorithm~\ref{algo:bf} tries to detect reachable target states. So,
if $S_2$ cannot reach a failure state, it is safe not to explore its
succesors. Otherwise, if $S_2$ \emph{can} reach a target state, then 
$S_1$ can reach a target state too, so it is safe to explore the
successors of $S_1$ only. By exploiting this heuristic,
Algorithm~\ref{algo:bf} could explore only a (small) subset of the
states of $A$, which has the potential for a dramatic improvement in
computation time. Remark that such techniques have already been
exploited in the setting of \emph{formal verification}, where several
so-called \emph{antichains algorithms} have been studied
\cite{DeWulf2006,Doyen2010,DBLP:conf/cav/FiliotJR09} and have proved
to be \emph{several order of magnitudes more efficient} than the
classical techniques of the literature.

Formally, for a set of states $V'\subseteq V$, we let
$\MaxElt{\fsim}{V'}=\{S\in V'\mid \nexists S'\in V' \textrm{ with }
S'\fsimstrict S\}$.  Intuitively, $\MaxElt{\fsim}{V'}$ is obtained
from $V'$ by removing all the states that are simulated by some other
state in $V'$. So the states we keep in $\MaxElt{\fsim}{V'}$ are
irredundant\footnote{They form an \emph{antichain} of states wrt
  $\fsim$.} wrt $\fsim$.  Then, we consider
Algorithm~\ref{algo:bfanti} which is an improved version of
Algorithm~\ref{algo:bf}.
\begin{algorithm}
  \Begin {
    $i\leftarrow 0$ \;
    $\tR_0\leftarrow \{S_0\}$ \;
    \Repeat{$\tR_i=\tR_{i-1}$}{
      $i\leftarrow i+1$ \;
      $\tR_i\leftarrow \tR_{i-1}\cup\Post{\tR_{i-1}}$ \;
      $\tR_i\leftarrow \MaxElt{\fsim}{\tR_i}$ \;
      \lIf{$\tR_i\cap F\neq\emptyset$}{ \Return{\texttt{Reachable}} \;}
    }
    \Return{\texttt{Not reachable}} \;
  }
  \caption{Improved breadth-first traversal.\label{algo:bfanti}}
\end{algorithm}

Proving the correctness and termination of Algorithm~\ref{algo:bfanti}
is a little bit more involved than for Algorithm~\ref{algo:bf} and
relies on the following lemma (proof in appendix):
\begin{lemma}\label{lem:prop-bf-anti}
  Let $A$ be an automaton and let $\fsim$ be a simulation relation for
  $A$. Let $R_0,R_1,\ldots$ and $\tR_0,\tR_1,\ldots$ denote
  respectively the sequence of sets computed by
  Algorithm~\ref{algo:bf} and Algorithm~\ref{algo:bfanti} on
  $A$. Then, for all $i\geq 0$: $\tR_i=\MaxElt{\fsim}{R_i}$.
\end{lemma}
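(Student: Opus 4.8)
The plan is to prove the statement by induction on $i$, establishing $\tR_i = \MaxElt{\fsim}{R_i}$ for all $i \geq 0$. The base case $i=0$ is immediate, since both algorithms initialise their respective sets to $\{S_0\}$, and a singleton is trivially its own antichain, so $\tR_0 = \{S_0\} = \MaxElt{\fsim}{\{S_0\}} = \MaxElt{\fsim}{R_0}$.

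For the inductive step, I would assume $\tR_{i-1} = \MaxElt{\fsim}{R_{i-1}}$ and aim to show $\tR_i = \MaxElt{\fsim}{R_i}$. Unfolding the two algorithms, $R_i = R_{i-1} \cup \Post{R_{i-1}}$ while $\tR_i = \MaxElt{\fsim}{\tR_{i-1} \cup \Post{\tR_{i-1}}}$. Substituting the induction hypothesis, it suffices to prove the key set-theoretic identity
\[
\MaxElt{\fsim}{\MaxElt{\fsim}{R_{i-1}} \cup \Post{\MaxElt{\fsim}{R_{i-1}}}} = \MaxElt{\fsim}{R_{i-1} \cup \Post{R_{i-1}}}.
\]
The strategy here is to isolate two general lemmas about the $\MaxElt{\fsim}{\cdot}$ operator and the interaction between simulation and $\Post{\cdot}$. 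The first is an idempotence-type property: for any sets $X, Y$, one has $\MaxElt{\fsim}{X \cup Y} = \MaxElt{\fsim}{\MaxElt{\fsim}{X} \cup Y}$, because removing $\fsim$-dominated elements from $X$ alone cannot discard any state that is maximal in the larger union. The second, which exploits the defining property of a simulation relation (Definition~\ref{def:simu}, clause~1), is a monotonicity statement: whenever $S_3 \fsim S_1$, every successor of $S_1$ is simulated by some successor of $S_3$, so $\MaxElt{\fsim}{\Post{X}} = \MaxElt{\fsim}{\Post{\MaxElt{\fsim}{X}}}$ — pruning dominated states before taking successors loses no maximal successor.

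The main obstacle I anticipate is the second lemma, namely showing that restricting to the antichain $\MaxElt{\fsim}{R_{i-1}}$ before applying $\Post{\cdot}$ does not lose any eventual maximal element of $\Post{R_{i-1}}$. Concretely, if $S_2 \in \Post{R_{i-1}}$ arises from some $S_1 \in R_{i-1}$ that was pruned (because $S_1 \fsimstrict S_1'$ for some $S_1' \in \MaxElt{\fsim}{R_{i-1}}$), I must produce a successor $S_2'$ of $S_1'$ with $S_2' \fsim S_2$; this is exactly the mimicking guaranteed by the simulation condition, and then argue that the final $\MaxElt{\fsim}{\cdot}$ restores equality of the two antichains. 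Care is needed because $\fsim$ is only a preorder, not a partial order, so distinct $\fsim$-equivalent states may coexist; I would rely on the fact that $\MaxElt{\fsim}{\cdot}$ is well-defined as a set and that the induction only claims equality of the computed sets, handling ties by noting that any such equivalent state is equally acceptable as a witness. Assembling the base case, the two lemmas, and the inductive combination yields the desired identity $\tR_i = \MaxElt{\fsim}{R_i}$.
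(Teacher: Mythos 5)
Your plan follows essentially the same route as the paper's own proof: induction on $i$, reduced to the two auxiliary identities $\MaxElt{\fsim}{X\cup Y}=\MaxElt{\fsim}{\MaxElt{\fsim}{X}\cup Y}$ and $\MaxElt{\fsim}{\Post{X}}=\MaxElt{\fsim}{\Post{\MaxElt{\fsim}{X}}}$, which correspond exactly to the paper's union identity and its Lemma~\ref{lem:post}, and the mimicking argument you sketch for the second identity (lifting a pruned predecessor to a maximal one and using clause~1 of Definition~\ref{def:simu}) is precisely the argument the paper gives. The proposal is correct in approach, so nothing further is needed.
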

Intuitively, this means that some state $S$ that is in $R_i$ could not
be present in $\tR_i$, but that we always keep in $\tR_i$ a state $S'$
that simulates $S$.  Then, we can prove that:
\begin{theorem}
  For all automata $A=\langle V,E,S_0,F\rangle$,
  Algorithm~\ref{algo:bfanti} terminates and returns
  \texttt{``Reachable''} iff $F$ is reachable in $A$.
\end{theorem}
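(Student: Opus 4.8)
The plan is to prove the two directions of the ``iff'' separately, and to handle termination as a third ingredient. Throughout, I would lean heavily on Lemma~\ref{lem:prop-bf-anti}, which gives the precise relationship $\tR_i = \MaxElt{\fsim}{R_i}$ between the sets computed by the two algorithms. This identity is the workhorse: it lets me transfer everything I already know about the correctness of Algorithm~\ref{algo:bf} (established in the text above) to Algorithm~\ref{algo:bfanti}.

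First I would prove termination. Since $R_0 \subseteq R_1 \subseteq \cdots \subseteq V$ and $V$ is finite, the sequence $(R_i)$ stabilises, say $R_k = R_{k+1}$. By Lemma~\ref{lem:prop-bf-anti}, $\tR_i = \MaxElt{\fsim}{R_i}$ for all $i$, so once $R_i$ stabilises the sequence $\tR_i$ stabilises as well; hence the \textbf{Repeat} loop reaches its exit condition $\tR_i = \tR_{i-1}$ after finitely many iterations (and in fact no later than Algorithm~\ref{algo:bf} would). This gives termination.

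Next I would establish the equivalence between reaching a target state and the algorithm returning \texttt{``Reachable''}. The cleanest route is to show that for every $i$, $\tR_i \cap F \neq \emptyset$ if and only if $R_i \cap F \neq \emptyset$, which immediately reduces the correctness of Algorithm~\ref{algo:bfanti} to that of Algorithm~\ref{algo:bf}. The direction $\tR_i \cap F \neq \emptyset \Rightarrow R_i \cap F \neq \emptyset$ is trivial, since $\tR_i = \MaxElt{\fsim}{R_i} \subseteq R_i$. For the converse, suppose $S \in R_i \cap F$. By the definition of $\MaxElt{\fsim}{\cdot}$, there is some $S' \in \MaxElt{\fsim}{R_i} = \tR_i$ with $S \fsim S'$ (either $S$ itself is maximal, or it is simulated by a maximal element, using that $\fsim$ is a preorder on a finite set so maximal elements exist above every element). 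Now the second clause of Definition~\ref{def:simu} applies in the form I need: since $S \fsim S'$ and $S \in F$ \dots{} here I must be careful about the direction of the implication.

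This is where I expect the main obstacle to lie. Definition~\ref{def:simu} states that $S_1 \fsim S_2$ and $S_2 \in F$ imply $S_1 \in F$, i.e.\ a target state can only be \emph{simulated by} a target state, so membership in $F$ is inherited downward along $\fsim$. But $\MaxElt{\fsim}{\cdot}$ retains the \emph{simulating} (larger) states and discards the simulated ones, so I must confirm that if I keep a maximal element $S'$ above a failure state $S$, then $S'$ is itself a failure state. Concretely, $S \in F$ and $S \fsim S'$; I need $S' \in F$. This follows precisely from clause~2 applied with $S_1 = S$, $S_2 = S'$: since $S \fsim S'$ and $S \in F$, we must in fact read clause~2 the other way, and the honest check is that a failure state is never dominated away, i.e.\ $\MaxElt{\fsim}{R_i}$ still meets $F$ whenever $R_i$ does. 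I would verify this by noting that if $S \in F \cap R_i$ then the whole up-set $\{S' \mid S \fsim S'\}$ consists of failure states by clause~2, and this up-set contains a maximal element of $R_i$, which therefore lies in $\tR_i \cap F$. Once this lemma-level fact is nailed down, the theorem closes: Algorithm~\ref{algo:bfanti} returns \texttt{``Reachable''} at step $i$ iff $\tR_i \cap F \neq \emptyset$ iff $R_i \cap F \neq \emptyset$ iff Algorithm~\ref{algo:bf} would, and the latter is correct iff $F$ is reachable. The termination argument guarantees that if $F$ is unreachable the loop exits and returns \texttt{``Not reachable''}.
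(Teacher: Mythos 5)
Your reduction has one genuinely useful and correct ingredient: for every index $i$ at which $\tR_i$ is actually computed, $\tR_i\cap F\neq\emptyset$ iff $R_i\cap F\neq\emptyset$. (The forward direction is inclusion; the backward direction works because above any $S\in F\cap R_i$ there is a $\fsim$-maximal element $S'$ of $R_i$ with $S'\fsim S$, and clause~2 of Definition~\ref{def:simu}, applied with $S_1=S'$ and $S_2=S$, gives $S'\in F$. Note that in your write-up the relation is repeatedly written the wrong way around --- you write $S\fsim S'$ for the maximal element $S'$, whereas $\MaxEltname{\fsim}$ retains the \emph{simulating} states, so you need $S'\fsim S$; you flag the confusion yourself, and the conclusion you reach is the right one, but the clause is used with $S_1=S'$, not ``read the other way.'')

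The genuine gap is elsewhere: your final chain of equivalences silently assumes that Algorithm~\ref{algo:bfanti} computes $\tR_i$ for every $i$ that Algorithm~\ref{algo:bf} computes $R_i$ for. It does not. The exit condition $\tR_\ell=\tR_{\ell-1}$, i.e.\ $\MaxElt{\fsim}{R_\ell}=\MaxElt{\fsim}{R_{\ell-1}}$, can hold while $R_\ell\neq R_{\ell-1}$ --- namely when every newly discovered state is simulated by a state already present. So Algorithm~\ref{algo:bfanti} may stop strictly earlier than Algorithm~\ref{algo:bf}, and you must prove that such an early stop is \emph{sound}: if $F$ is reachable in exactly $k$ steps, the loop cannot satisfy $\tR_\ell=\tR_{\ell-1}$ for any $\ell<k$. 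This is where the paper spends almost all of its effort: assuming $\tR_\ell=\tR_{\ell-1}$ with $\ell<k$, it deduces that every state of $R_\ell$ is simulated by a state of $R_{\ell-1}$, then takes the witnessing path $v_0,\ldots,v_k$ to $F$ and the least position $m$ such that $v_m$ is simulated by some state of $R_{\ell-1}$ but $v_{m+1}$ is not; clause~1 of Definition~\ref{def:simu} produces a successor in $R_\ell$ simulating $v_{m+1}$, which is in turn simulated from $R_{\ell-1}$, contradicting minimality (and if every $v_j$ up to $v_k$ is simulated from $R_{\ell-1}$, clause~2 puts a state of $F$ in $R_{\ell-1}$, contradicting minimality of $k$). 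Without this argument --- which is the only place the ``mimicking'' clause~1 of the simulation is used at all --- your proof would accept an algorithm that converges prematurely and wrongly reports \texttt{``Not reachable''}. Your termination argument and the unreachable case are fine.
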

\begin{proof}
  The proof relies on the comparison between the sequence of sets
  $R_0,R_1,\ldots$ computed by Algorithm~\ref{algo:bf} (which is
  correct and terminates) and the sequence $\tR_0,\tR_1,\ldots$
  computed by Algorithm~\ref{algo:bfanti}. 

  Assume $F$ is reachable in $A$ in $k$ steps and not reachable in
  less than $k$ steps. Then, there exists a path $v_0, v_1,\ldots v_k$
  with $v_0=S_0$, $v_k\in F$, and, for all $0\leq i\leq k$ $v_i\in
  R_k$. Let us first show \textit{per absurdum} that the loop in
  Algorithm~\ref{algo:bfanti} does not finish before the $k$th
  step. Assume it is not the case, i.e.\ there exists $0< \ell< k$
  s.t.\ $\tR_\ell=\tR_{\ell-1}$. This implies that
  $\MaxElt{\fsim}{R_\ell}=\MaxElt{\fsim}{R_{\ell-1}}$ through Lemma~\ref{lem:prop-bf-anti}. Since
  $R_\ell\neq \R_{\ell-1}$, we deduce that all the states that have
  been added to $R_\ell$ are simulated by some state already present
  in $R_{\ell-1}$: for all $S\in R_\ell$, there is $S'\in R_{\ell-1}$
  s.t.\ $S'\fsim S$. Thus, in particular, there is $S'\in R_{\ell-1}$
  s.t.\ $S'\fsim v_\ell$. We consider two cases. Either there is
  $S'\in R_{\ell-1}$ s.t.\ $S'\fsim v_k$. Since $v_k\in F$, $F\cap
  R_{\ell-1}\neq \emptyset$, which contradicts our hypothesis that $F$
  is not reachable in less than $k$ steps. Otherwise, let $0\leq m< k$
  be the least position in the path s.t.\ there is $S'\in R_{\ell-1}$
  with $S'\fsim v_m$, but there is no $S''\in R_{\ell-1}$ with
  $S''\fsim v_{m+1}$. In this case, since $S'\fsim v_m$ and
  $(v_m,v_{m+1})\in E$, there is $S\in\Post{S'}\subseteq R_{\ell}$
  s.t.\ $S\fsim v_{m+1}$. However, we have made the hypothesis that
  every element in $R_\ell$ is simulated by some element in
  $R_{\ell-1}$. Thus, there is $S'' \in R_{\ell-1}$ s.t.\ $S''\fsim
  S$. Since $S\fsim v_{m+1}$, we deduce that $S''\fsim v_{m+1}$, with
  $S''\in R_{\ell-1}$, which contradicts our assumption that $S'' \notin R_{\ell-1}$. Thus,
  Algorithm~\ref{algo:bfanti} will not stop before the $k$th
  iteration, and we know that there is $S_F\in R_k$ s.t.\ $S_F\in
  F$.  By Lemma~\ref{lem:prop-bf-anti}, $\tR_k=\MaxElt{\fsim}{R_k}$,
  hence there is $S'\in \tR_k$ s.t.\ $S'\fsim S$. By
  Definition~\ref{def:simu}, $S'\in F$ since $S\in F$. Hence,
  $\tR_k\cap F\neq \emptyset$ and Algorithm~\ref{algo:bfanti}
  terminates after $k$ steps with the correct answer.

  Otherwise, assume $F$ is not reachable in $A$. Hence, for every
  $i\geq 0$, $R_i\cap F=\emptyset$. Since $\tR_i\subseteq R_i$ for all
  $i\geq 0$, we conclude that $\tR_i\cap F=\emptyset$ for all $i\geq
  0$. Hence, Algorithm~\ref{algo:bfanti} never returns
  \texttt{``Reachable''} in this case. It remains to show that the
  \textbf{repeat} loop eventually terminates. Since $F$ is not
  reachable in $A$, there is $k$ s.t.\ $R_k=R_{k-1}$. Hence,
  $\MaxElt{\fsim}{R_k}=\MaxElt{\fsim}{R_{k-1}}$. By
  Lemma~\ref{lem:prop-bf-anti} this implies that
  $\tR_k=\tR_{k-1}$. Thus, Algorithm~\ref{algo:bfanti} finishes after
  $k$ steps and returns \texttt{``Not reachable''}.
\end{proof}

In order to apply Algorithm~\ref{algo:bfanti}, it remains to show how
to compute a simulation relation, which should contain as many pairs of
states as possible, since this raises the chances to avoid exploring
some states during the breadth-first search. It is well-known that the
largest simulation relation of an automaton can be computed in
polynomial time wrt the size of the automaton
\cite{DBLP:conf/focs/HenzingerHK95}. However, this requires first
computing the whole automaton, which is exactly what we want to avoid in
our case. So we need to define simulations relations that can be
computed \textit{a priori}, only by considering the structure of the
states (in our case, the functions $\nat$ and $\rct$). This is the
purpose of the next section.

\Section{Idle tasks simulation relation}
\label{section:idlesim}

In this section we define a simulation relation $\idlesim$, called the
\emph{idle tasks simulation relation} that can be computed by
inspecting the values $\nat$ and $\rct$ stored in the states.

\begin{definition}
  Let $\tau$ be a set of sporadic tasks. Then, the {\em idle tasks
    preorder} $\idlesim\subseteq\states{\tau}\times\states{\tau}$ is
  s.t.\ for all $S_1$,$S_2$: $S_1 \idlesim S_2$ iff
  \begin{enumerate}
  \item $\rct_{S_1}=\rct_{S_2}$ ;
  \item for all $\tau_i$ s.t.\ $\rct_{S_1}(\tau_i) = 0$:
    $\nat_{S_1}(\tau_i) \leq\nat_{S_2}(\tau_i)$~;
  \item for all $\tau_i$ s.t.\ $\rct_{S_1}(\tau_i) > 0$:
    $\nat_{S_1}(\tau_i) =\nat_{S_2}(\tau_i)$.
  \end{enumerate}

  
\end{definition}
Notice the relation is reflexive as well as transitive, and thus
indeed a preorder. It also defines a partial order on $\states{\tau}$
as it is antisymmetric. Moreover, since $S_1\idlesim S_2$ implies that
$\rct_{S_1}=\rct_{S_2}$, we also have
$\act(S_1)=\act(S_2)$. Intuitively, a state $S_1$ simulates a state
$S_2$ iff $(i)$ $S_1$ and $S_2$ coincide on all the active tasks
(i.e., the tasks $\tau_i$ s.t.\ $\rct_{S_1}(\tau_i)>0$), and $(ii)$ the
$\nat$ of each idle task is not larger in $S_1$ than in $S_2$. Let us
show that this preorder is indeed a simulation relation when we
consider a \emph{memoryless} scheduler (which is often the case in
practice):

\begin{theorem}
\label{theorem:idlesim}
Let $\tau$ be a set of sporadic tasks and let $\run$ be a
\emph{memoryless} (deterministic) scheduler for $\tau$ on $m$
processors. Then, $\idlesim$ is a simulation relation for
$A(\tau,\run)$.
\end{theorem}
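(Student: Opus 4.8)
The plan is to verify the two defining conditions of a simulation relation (Definition~\ref{def:simu}) for the preorder $\idlesim$, which the preceding remark already established to be a preorder. Matching the variable names of Definition~\ref{def:simu}, I am given an edge $(S_1,S_2)\in E$ together with a state $S_3$ with $S_3\idlesim S_1$, and must produce $S_4$ with $(S_3,S_4)\in E$ and $S_4\idlesim S_2$; separately, I must show that $S_1\idlesim S_2$ and $S_2\in\fail_\tau$ force $S_1\in\fail_\tau$.

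I would dispatch the target condition first, as it is the easier one. Since $S_1\idlesim S_2$ forces $\rct_{S_1}=\rct_{S_2}$, the $\rct$-terms cancel in the laxity difference, giving $\lax_{S_1}(\tau_i)-\lax_{S_2}(\tau_i)=\nat_{S_1}(\tau_i)-\nat_{S_2}(\tau_i)$ for every $\tau_i$. This difference is $0$ on active tasks (condition 3 of $\idlesim$) and $\le 0$ on idle tasks (condition 2), so $\lax_{S_1}(\tau_i)\le\lax_{S_2}(\tau_i)$ for all $\tau_i$. Hence any task witnessing a negative laxity in $S_2$ witnesses one in $S_1$, i.e.\ $S_1\in\fail_\tau$.

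For the simulation condition, I would decompose the edge as $S_1\xrightarrow{\tau'}S_1'\xrightarrow{\run}S_2$ for some request set $\tau'\subseteq\rr(S_1)$ and intermediate state $S_1'$, and mimic it from $S_3$ using the \emph{same} set $\tau'$. The first thing to check is that $\tau'$ is legal from $S_3$, i.e.\ $\rr(S_1)\subseteq\rr(S_3)$: a task eligible in $S_1$ has $\nat_{S_1}=\rct_{S_1}=0$, so it is idle, shares the same (zero) $\rct$ in $S_3$, and by condition 2 satisfies $\nat_{S_3}\le\nat_{S_1}=0$, hence is eligible in $S_3$. Taking $S_3\xrightarrow{\tau'}S_3'$, a short case split on membership in $\tau'$ shows the request transition preserves the relation, $S_3'\idlesim S_1'$: tasks in $\tau'$ are reset identically to $\nat=T_i$, $\rct=C_i$ in both states, while tasks outside $\tau'$ retain their $S_3$- and $S_1$-values, so all three conditions of $\idlesim$ transfer from $S_3\idlesim S_1$.

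The crux is the clock-tick step. Because $S_3'\idlesim S_1'$ entails $\act(S_3')=\act(S_1')$ together with equality of $\nat$ and $\rct$ on these active tasks, the \emph{memoryless} hypothesis yields $\run(S_3')=\run(S_1')$; this is precisely where memorylessness is indispensable, since otherwise the scheduler could split on the idle tasks, where $S_3'$ and $S_1'$ may differ. Letting $S_4$ be the clock-tick successor of $S_3'$ under this common schedule, the identical set of tasks is decremented in $S_3'$ and $S_1'$, so $\rct_{S_4}=\rct_{S_2}$, and each $\nat$ is updated by the monotone map $x\mapsto\max\{x-1,0\}$; monotonicity propagates the $\nat$-inequalities of $S_3'\idlesim S_1'$ to $S_4$ and $S_2$. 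The delicate point, and the step I expect to require the most care, is the boundary case of a task with $\rct=1$ that is scheduled and thus turns idle after the tick: such a task was active before the tick (so its $\nat$-values agreed exactly by condition 3 for $S_3'\idlesim S_1'$) and is idle afterwards (so condition 2 for $S_4\idlesim S_2$ only demands $\le$), and the exact agreement comfortably yields the required inequality. Handling this active-to-idle transition together with the routine cases establishes $S_4\idlesim S_2$ and completes the verification.
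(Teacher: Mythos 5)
Your proof is correct and follows essentially the same route as the paper's: decompose the edge into a request step followed by a clock-tick step, show each preserves $\idlesim$ (invoking memorylessness to get agreement of the scheduler on the intermediate states), and verify the failure-state condition via the laxity computation. The only addition is your explicit check that $\rr(S_1)\subseteq\rr(S_3)$, a point the paper's proof leaves implicit when it asserts the existence of the $\tau'$-request successor of the simulating state.
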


\begin{proof}
  Let $S_1$, $S_1'$ and $S_2$ be three states in $\states{\tau}$
  s.t.\ $(S_1, S_1')\in E$ and $S_2\idlesim S_1$, and let us show that
  there exists $S_2'\in \states{\tau}$ with $(S_2,S_2')\in E$ and
  $S_2'\idlesim S_1'$.

  Since $(S_1, S_1')\in E$, there exists $\Sbar_1$ and $\tau'\subseteq
  \tau$ s.t.\ $S_1\xrightarrow{\tau'}\Sbar_1\xrightarrow{\run}S_1'$, by
  Definition~\ref{def:automaton}. Let $\Sbar_2$ be the (unique) state
  s.t.\ $S_2\xrightarrow{\tau'}\Sbar_2$, and let us show that
  $\Sbar_2\idlesim \Sbar_1$:
  \begin{enumerate}
  \item for all $\tau_i\in \tau'$:
    $\rct_{\Sbar_1}(\tau_i)=C_i=\rct_{\Sbar_2}(\tau_i)$. For all
    $\tau_i\not\in \tau'$:
    $\rct_{\Sbar_1}(\tau_i)=\rct_{S_1}(\tau_i)$,
    $\rct_{\Sbar_2}(\tau_i)=\rct_{S_2}(\tau_i)$, and, since
    $S_2\idlesim S_1$: $\rct_{S_1}(\tau_i)=\rct_{S_2}(\tau_i)$. Thus we
    conclude that $\rct_{\Sbar_1}=\rct_{\Sbar_2}$.
  \item Let $\tau_i$ be s.t.\ $\rct_{\Sbar_1}(\tau_i)=0$. Then, we must have
    $\tau_i\not\in \tau'$. In this case,
    $\nat_{\Sbar_1}(\tau_i)=\nat_{S_1}(\tau_i)$,
    $\nat_{\Sbar_2}(\tau_i)=\nat_{S_2}(\tau_i)$, and, since $S_2\idlesim
    S_1$, $\nat_{S_2}(\tau_i)\leq\nat_{S_1}(\tau_i)$. Hence,
    $\nat_{\Sbar_2}(\tau_i)\leq\nat_{\Sbar_1}(\tau_i)$. We conclude
    that for every $\tau_i$ s.t.\ $\rct_{\Sbar_1}(\tau_i)=0$:
    $\nat_{\Sbar_2}(\tau_i)\leq\nat_{\Sbar_1}(\tau_i)$
  \item By similar reasoning, we conclude that, for all $\tau_i$
    s.t.\ $\rct_{\Sbar_1}(\tau_i)>0$:
    $\nat_{\Sbar_1}(\tau_i)=\nat_{\Sbar_2}(\tau_i)$
  \end{enumerate}

  Then observe that, by Definition~\ref{def:simu}, $\Sbar_2\idlesim
  \Sbar_1$ implies that $\act(\Sbar_1)=\act(\Sbar_2)$. Let $\tau_i$ be
  a task in $\act(\Sbar_1)$, hence $\rct_{\Sbar_1}(\tau_i)>0$. In this
  case, and since $\Sbar_2\idlesim \Sbar_1$, we conclude that
  $\rct_{\Sbar_1}(\tau_i)=\rct_{\Sbar_2}(\tau_i)$ and
  $\nat_{\Sbar_1}(\tau_i)=\nat_{\Sbar_2}(\tau_i)$. Thus, since $\run$
  is memoryless by hypothesis, $\run(\Sbar_1)=\run(\Sbar_2)$, by
  Definition~\ref{def:sched}. Let $S_2'$ be the unique state
  s.t.\ $\Sbar_2\xrightarrow{\run}S_2'$, and let us show that
  $S_2'\idlesim S_1'$:
  \begin{enumerate}
  \item Since $\Sbar_2\idlesim\Sbar_1$, we know that
    $\rct_{\Sbar_1}=\rct_{\Sbar_2}$. Let $\tau_i$ be a task in
    $\run(\Sbar_1)=\run(\Sbar_2)$. By Definition~\ref{def:clocktick}:
    $\rct_{S_1'}(\tau_i)=\rct_{\Sbar_1}(\tau_i)-1$ and
    $\rct_{S_2'}(\tau_i)=\rct_{\Sbar_2}(\tau_i)-1$.  Hence,
    $\rct_{S_1'}(\tau_i)=\rct_{S_2'}(\tau_i)$. For a task
    $\tau_i\not\in\run(\Sbar_1)=\run(\Sbar_2)$, we have
    $\rct_{S_1'}(\tau_i)=\rct_{\Sbar_1}(\tau_i)$ and
    $\rct_{S_2'}(\tau_i)=\rct_{\Sbar_2}(\tau_i)$, again by
    Definition~\ref{def:clocktick}. Hence,
    $\rct_{S_1'}(\tau_i)=\rct_{S_2'}(\tau_i)$. We conclude that
    $\rct_{S_1'}=\rct_{S_2'}$.
  \item Let $\tau_i$ be a task s.t.\ $\rct_{S_1'}(\tau_i)=0$. By
    Definition~\ref{def:clocktick}:
    $\nat_{S_1'}(\tau_i)=\max\{0,\nat_{\Sbar_1}(\tau_i)-1\}$ and
    $\nat_{S_2'}(\tau_i)=\max\{0,\nat_{\Sbar_2}(\tau_i)-1\}$. However,
    since $\Sbar_2\idlesim \Sbar_1$, we know that
    $\nat_{\Sbar_1}(\tau_i)\leq\nat_{\Sbar_2}(\tau_i)$. We conclude
    that $\nat_{S_1'}(\tau_i)\leq \nat_{S_2'}(\tau_i)$.
  \item Let $\tau_i$ be a task s.t.\ $\rct_{S_1'}(\tau_i)>0$. By
    Definition~\ref{def:clocktick}:
    $\nat_{S_1'}(\tau_i)=\max\{0,\nat_{\Sbar_1}(\tau_i)-1\}$ and
    $\nat_{S_2'}(\tau_i)=\max\{0,\nat_{\Sbar_2}(\tau_i)-1\}$. Since
    $\rct_{S_1'}(\tau_i)>0$, we have $\rct_{\Sbar_1}(\tau_i)>0$ too,
    since $\rct$ can only decrease with time elapsing. Since
    $S_1\idlesim S_2$ we have also
    $\nat_{\Sbar_2}(\tau_i)=\nat_{\Sbar_1}(\tau_i)$. We conclude that
    $\nat_{S_1'}(\tau_i)=\nat_{S_2'}(\tau_i)$.
  \end{enumerate}

  To conclude the proof it remains to show that, if $S_2\idlesim S_1$
  and $S_1\in\fail_\tau$ then $S_2\in\fail_\tau$ too. Let $\tau_i$ be
  a task s.t
  $\lax_{S_1}(\tau_i)=\nat_{S_1}(\tau_i)-(T_i-D_i)-\rct_{S_1}(\tau_i)<0$. Since
  $S_2\idlesim S_1$: $\rct_{S_2}(\tau_i)=\rct_{S_1}(\tau_i)$, and
  $\nat_{S_2}(\tau_i)\leq\nat_{S_1}(\tau_i)$. Hence,
  $\lax_{S_2}(\tau_i)=\nat_{S_2}(\tau_i)-(T_i-D_i)-\rct_{S_2}(\tau_i)\leq
  \lax_{S_1}(\tau_i)<0$, and thus, $S_2\in\fail_\tau$.

\end{proof}

Note that Theorem \ref{theorem:idlesim} does {\em not} require the
scheduler to be work-conserving. Theorem \ref{theorem:idlesim} tells
us that any state where tasks have to wait until their next job
release can be simulated by a corresponding state where they can
release their job earlier, regardless of the specifics of the
scheduling policy as long as it is deterministic, predictable and
memoryless, which is what many popular schedulers are in practice,
such as preemptive DM or EDF.

\begin{figure}[ht]
\begin{center}
\includegraphics[scale=0.57]{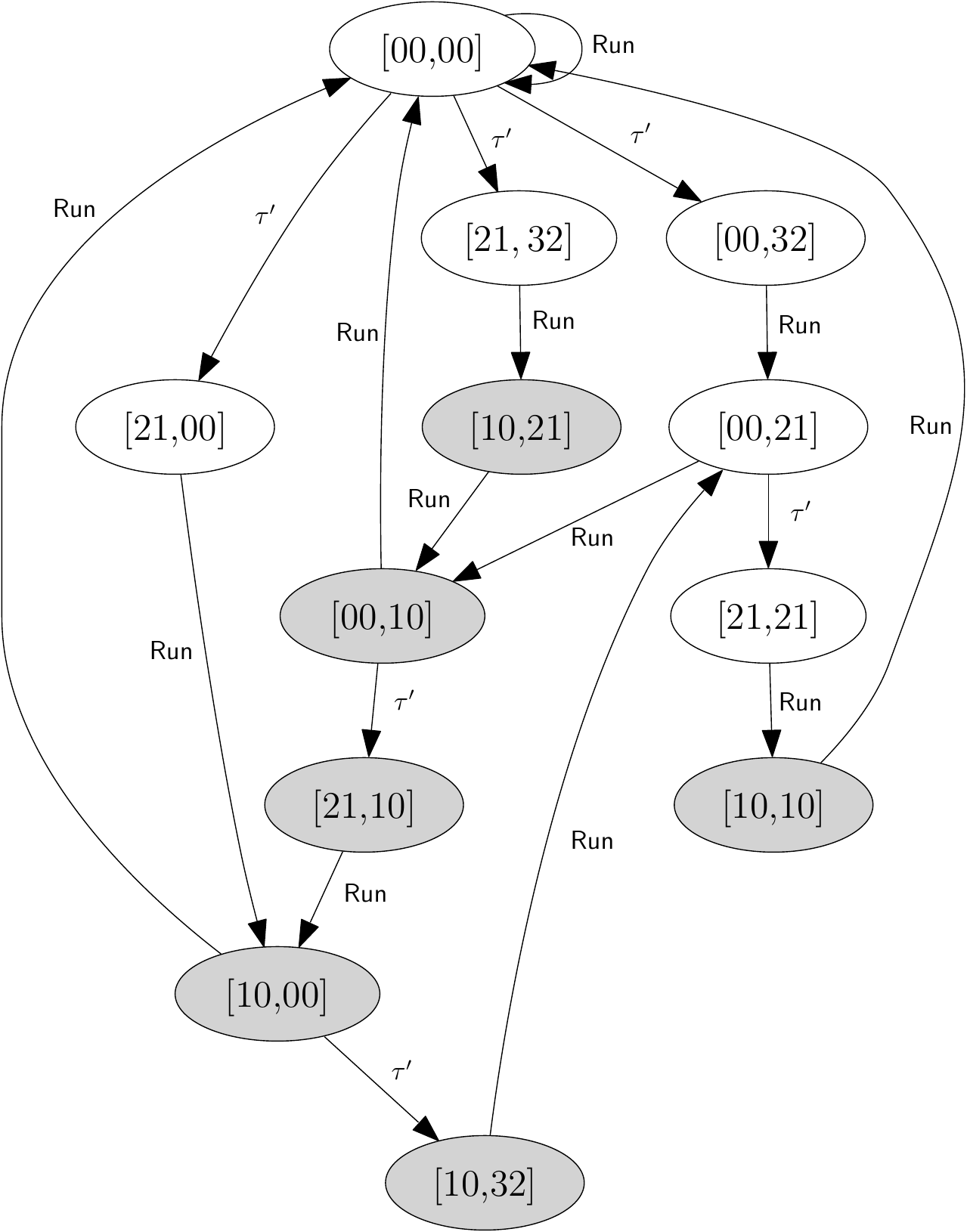}
\end{center}
\caption{Algorithm~\ref{algo:bfanti} exploits simulation relations to avoid exploring states needlessly. With $\idlesim$ on this small example, all grey states can be avoided as they are simulated by another state (e.g.\ $[00,21] \idlesim [10,21]$ and $[00,00] \idlesim [10,10]$).}
\label{fig:simulation}
\end{figure}

Figure~\ref{fig:simulation}, previously presented in Section~\ref{section:problem}, illustrates the effect of using $\idlesim$ with Algorithm~\ref{algo:bfanti}. If a state $S_1$ has been encountered previously and we find another state $S_2$ such that $S_1 \idlesim S_2$, then we can avoid exploring $S_2$ and its successors altogether. However, note that this does not mean we will never encounter a successor of $S_2$ as they may be encountered through other paths (or indeed, may have been encountered already).

\Section{Experimental results\label{section:experimental}}

We implemented both Algorithm~\ref{algo:bf} (denoted \emph{BF}) and Algorithm~\ref{algo:bfanti} (denoted \emph{ACBF} for ``antichain breadth-first'') in C++ using the STL and Boost libraries 1.40.0. We ran head-to-head tests on a system equipped with a quad-core 3.2~GHz Intel Core i7 processor and 12~GB of RAM running under Ubuntu Linux 8.10 for AMD64. Our programs were compiled with Ubuntu's distribution of GNU g++ 4.4.5 with flags for maximal optimization.

We based our experimental protocol on that used in~\cite{Baker2007}. We generated random task sets where task minimum interarrival times $T_i$ were uniformly distributed in $\{1,2,\ldots,T_{\max}\}$, task WCETs $C_i$ followed an exponential distribution of mean $0.35\,T_i$ and relative deadlines were uniformly distributed in $\{C_i,\ldots, T_i\}$. Task sets where $n \leq m$ were dropped as well as sets where $\sum_i C_i/T_i > m$. Duplicate task sets were discarded as were sets which could be scaled down by an integer factor. We used EDF as scheduler and simulated $m=2$ for all experiments. Execution times (specifically, used CPU time) were measured using the C \texttt{clock()} primitive.

\begin{figure}
\begin{center}
\includegraphics[scale=0.45]{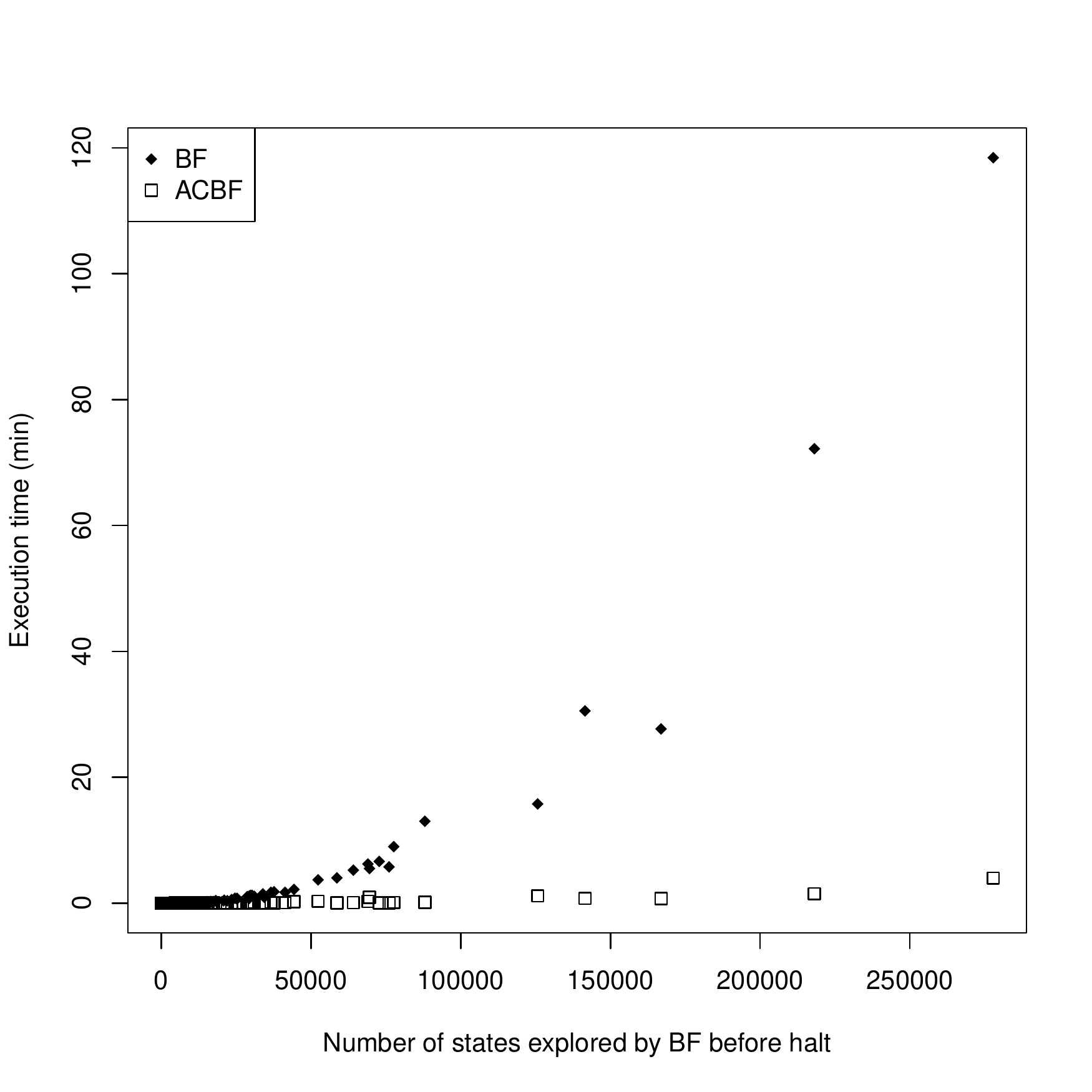}
\end{center}
\caption{States explored by BF before halt vs. execution time of BF and ACBF (5,000 task sets with $T_{\max} = 6$).}
\label{fig:states_vs_speed}
\end{figure}

Our first experiment used $T_{\max} = 6$ and we generated 5,000 task
sets following the previous rules (of which 3,240 were
EDF-schedulable). Figure~\ref{fig:states_vs_speed} showcases the
performance of both algorithms on these sets. The number of states
explored by BF before halting gives a notion of how big the automaton
was (if no failure state is reachable, the number is exactly the
number of states in the automaton that are reachable from the initial
state; if a failure state is reachable, BF halts before exploring the
whole system). It can be seen that while ACBF and BF show similar
performance for fairly small systems (roughly up to 25,000 states),
ACBF outperforms BF for larger systems, and we can thus conclude that
the antichains technique \emph{scales better}. The largest system
analyzed in this experiment was schedulable (and BF thus had to
explore it completely), contained 277,811 states and was handled in
slightly less than 2 hours with BF, whereas ACBF clocked in at 4
minutes.

\begin{figure}
\begin{center}
\includegraphics[scale=0.45]{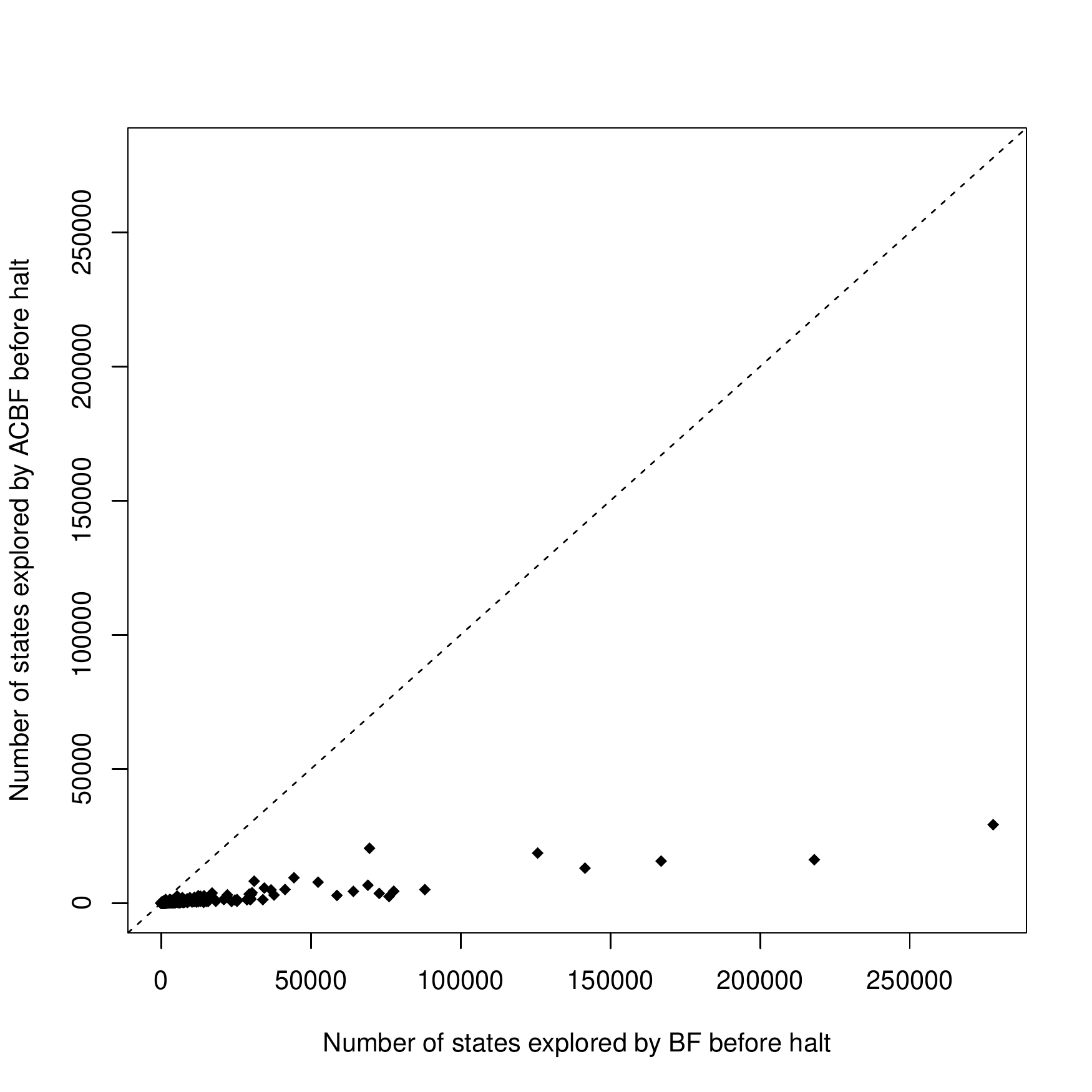}
\end{center}
\caption{States explored by BF before halt vs. states explored by ACBF before halt (5,000 task sets with $T_{\max} = 6$).}
\label{fig:bc_vs_ac_states}
\end{figure}

Figure~\ref{fig:bc_vs_ac_states} shows, for the same experiment, a
comparison between explored states by BF and ACBF. This comparison is
more objective than the previous one, as it does not account for the
actual efficiency of our crude implementations. As can be seen, the
simulation relation allows ACBF to drop a considerable amount of
states from its exploration as compared with BF: on average, 70.8\%
were avoided (64.0\% in the case of unschedulable systems which cause
an early halt, 74.5\% in the case of schedulable systems). This of
course largely explains the better performance of ACBF, but we must
also take into account the overhead due to the more complex
algorithm. In fact, we found that in some cases, ACBF would yield
worse performance than BF. However, to the best of our knowledge, this
only seems to occur in cases where BF took relatively little time to
execute (less than five seconds) and is thus of no concern in
practice.


\begin{figure}
\begin{center}
\includegraphics[scale=0.45]{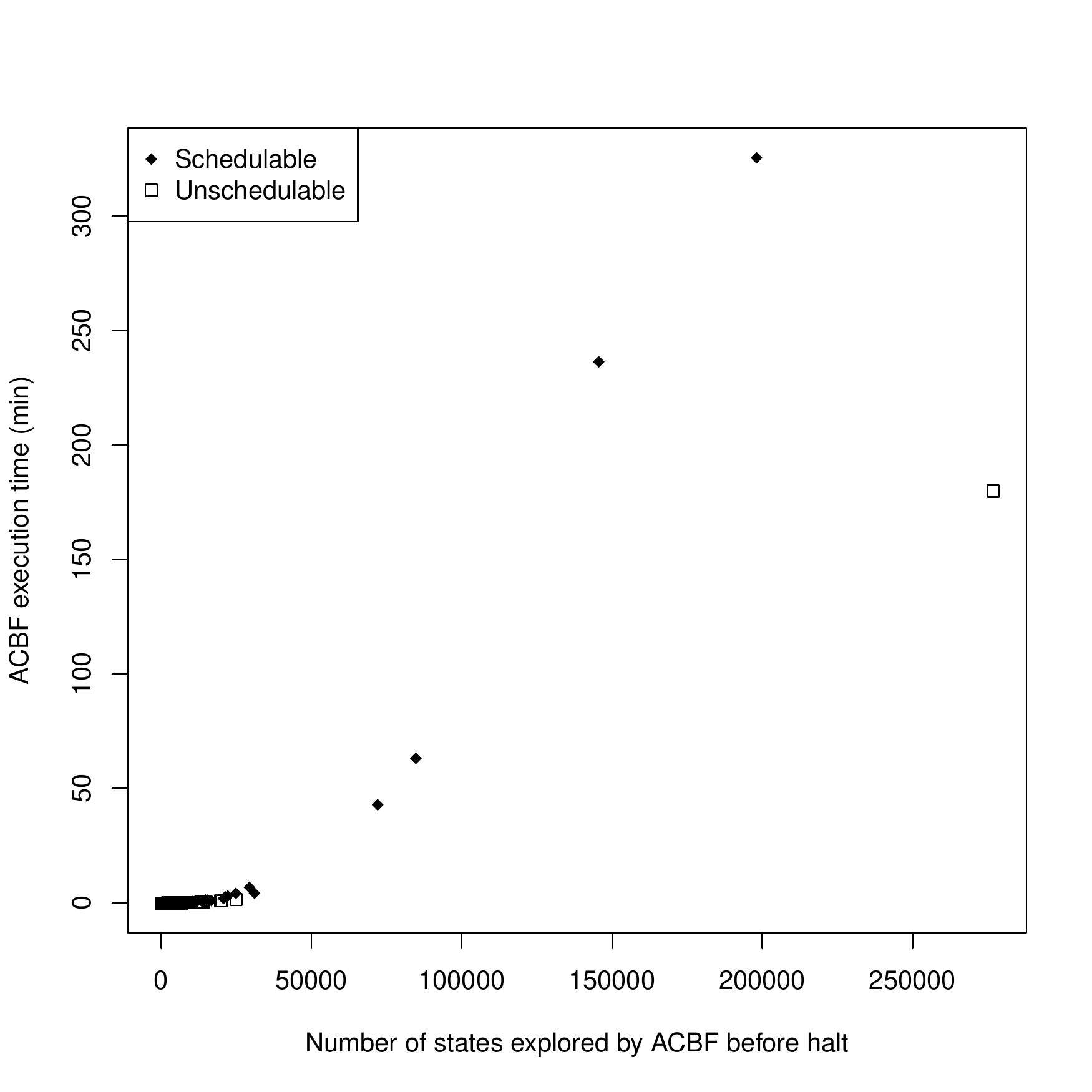}
\end{center}
\caption{States explored by ACBF before halt vs. ACBF execution time (5,000 task sets with $T_{\max} = 8$).}
\label{fig:ac_states_vs_time}
\end{figure}

Our second experiment used 5,000 randomly generated task sets using $T_{\max} = 8$ (of which 3,175 were schedulable) and was intended to give a rough idea of the limits of our current ACBF implementation. Figure~\ref{fig:ac_states_vs_time} plots the number of states explored by ACBF before halting versus its execution time. We can first notice the plot looks remarkably similar to BF in Figure~\ref{fig:states_vs_speed}, which seems to confirm the exponential complexity of ACBF which we predicted. The largest schedulable system considered necessitated exploring 198,072 states and required roughly 5.5 hours. As a spot-check, we ran BF on a schedulable system where ACBF halted after exploring 14,754 states in 78 seconds; BF converged after just over 6 hours, exploring 434,086 states.

Our experimental results thus yield several interesting observations. The number of states explored by ACBF using the idle tasks simulation relation is significantly less on average than BF. 
This gives an objective metric to quantify the computational performance gains made by ACBF wrt BF. In practice using our implementation, ACBF outperforms BF for any reasonably-sized automaton, but we have seen that while our current implementation of ACBF defeats BF, it gets slow itself for slightly more complicated task sets. However, we expect smarter implementations and more powerful simulation relations to push ACBF much further.

\Section{Conclusions and future work}
\label{section:conclusions}

We have successfully adapted a novel algorithmic technique developed by the formal verification community, known as antichain algorithms~\cite{DeWulf2006, Doyen2010}, to greatly improve the performance of an existing exact schedulability test for sporadic hard real-time tasks on identical multiprocessor platforms~\cite{Baker2007}. To achieve this, we developed and proved the correctness of a simulation relation on a formal model of the scheduling problem. While our algorithm has the same worst-case performance as a naive approach, we have shown experimentally that our preliminary implementation can still outperform the latter in practice.

The model introduced in Section~\ref{section:automaton} yields the added contribution of bringing a fully formalized description of the scheduling problem we considered. This allowed us to formally define various scheduling concepts such as memorylessness, work-conserving scheduling and various scheduling policies. These definitions are univocal and not open to interpretation, which we believe is an important consequence. We also clearly define what an execution of the system is, as any execution is a possibly infinite path in the automaton, and all possible executions are accounted for.

We expect to extend these results to the general Baker-Cirinei automaton which allows for arbitrary deadlines in due time. We chose to focus on constrained deadlines in this paper mainly because it simplified the automaton and made our proofs simpler, but we expect the extension to arbitrary deadlines to be fairly straightforward. We also only focused on developing \emph{forward simulations}, but there also exist antichain algorithms that use {\em backward simulations}~\cite{Doyen2010}. It would be interesting to research such relations and compare the efficiency of those algorithms with that presented in this paper.

The task model introduced in Section~\ref{section:problem} can be further extended to enable study of more complex problems, such as job-level parallelism and semi-partitioned scheduling. The model introduced in Section~\ref{section:automaton} can also be extended to support broader classes of schedulers. This was briefly touched on in~\cite{Baker2007}. For example, storing the previous scheduling choice in each state would allow modelling of non-preemptive schedulers.

It has not yet been attempted to properly optimize our antichain algorithm by harnessing adequate data structures; our objective in this work was primarily to get a preliminary ``proof-of-concept'' comparison of the performance of the naive and antichain algorithms. Adequate implementation of structures such as \emph{binary decision diagrams} \cite{Bryant1992} and \emph{covering sharing trees} \cite{Delzanno2004} should allow pushing the limits of the antichain algorithm's performance.

Antichain algorithms should terminate quicker by using coarser simulation preorders. Researching other simulation preorders on our model, particularily preorders that are a function of the chosen scheduling policy, is also key to improving performance. Determining the complexity class of sporadic task set feasability on identical multiprocessor platforms is also of interest, as it may tell us whether other approaches could be used to solve the problem.

\appendix

\Section{Proof of Lemma~\ref{lem:prop-bf-anti}}
\label{appendix:lemma}
In order to establish the lemma, we first show that, for any set $B$
of states, the following holds:
\begin{lemma}\label{lem:post}
  $\MaxElt{\fsim}{\Post{\MaxElt{\fsim}{B}}}=\MaxElt{\fsim}{\Post{B}}$.
\end{lemma}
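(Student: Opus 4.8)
The plan is to prove the set equality by double inclusion, resting on two elementary facts about $\MaxElt{\fsim}{\cdot}$ together with the defining property of a simulation relation. First I would record a \emph{covering} observation: since the state space is finite and $\fsim$ is a preorder, its strict part $\fsimstrict$ is a strict partial order (irreflexive and transitive, as one checks directly from the preorder axioms), so every element of a finite set $W$ lies $\fsim$-below some $\fsimstrict$-maximal element of $W$; that is, for each $w \in W$ there is $w^\ast \in \MaxElt{\fsim}{W}$ with $w^\ast \fsim w$. Applied to $W = B$, this says every state of $B$ is simulated by some state retained in $\MaxElt{\fsim}{B}$.

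Next I would establish the two ``$\Post$-level'' relations between $\Post{\MaxElt{\fsim}{B}}$ and $\Post{B}$. The easy one is $\Post{\MaxElt{\fsim}{B}} \subseteq \Post{B}$, immediate from $\MaxElt{\fsim}{B} \subseteq B$ and monotonicity of $\Post{\cdot}$. The substantive one is that every successor in $\Post{B}$ is simulated by a successor in $\Post{\MaxElt{\fsim}{B}}$: given $T \in \Post{B}$, pick $S \in B$ with $(S,T) \in E$, cover $S$ by $S^\ast \in \MaxElt{\fsim}{B}$ with $S^\ast \fsim S$, and apply clause~1 of Definition~\ref{def:simu} to the edge $(S,T)$ and to $S^\ast \fsim S$ to obtain $T^\ast$ with $(S^\ast,T^\ast) \in E$ and $T^\ast \fsim T$; then $T^\ast \in \Post{\MaxElt{\fsim}{B}}$ witnesses the claim. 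This is the one place where the simulation hypothesis is genuinely used.

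With these in hand, writing $X = \Post{\MaxElt{\fsim}{B}}$ and $Y = \Post{B}$, I would finish with a generic argument: whenever $X \subseteq Y$ and every element of $Y$ is $\fsim$-dominated by an element of $X$, the two sets have the same $\fsim$-maximal elements. For $\MaxElt{\fsim}{X} \subseteq \MaxElt{\fsim}{Y}$, take $z \in \MaxElt{\fsim}{X}$; if some $y \in Y$ had $y \fsimstrict z$, dominating $y$ by $x \in X$ and chaining $x \fsim y \fsimstrict z$ would give $x \fsimstrict z$ with $x \in X$, contradicting maximality of $z$ in $X$. Transitivity of $\fsim$ and $\fsimstrict$ does all the work here, and no further hypothesis is needed.

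The reverse inclusion is the delicate step and the main obstacle. Taking $z \in \MaxElt{\fsim}{Y}$ and dominating it by some $x \in X \subseteq Y$, maximality of $z$ in $Y$ forbids $x \fsimstrict z$, so $x$ and $z$ are $\fsim$-equivalent; to conclude $z \in \MaxElt{\fsim}{X}$ one needs $z$ \emph{itself} to belong to $X$, which follows from $x = z$. This last identification uses that $\fsim$ is \emph{antisymmetric}, since an equivalence $x \fsim z$ and $z \fsim x$ then forces $x = z$; it holds for the idle tasks relation $\idlesim$, which the paper already observes is a partial order, and hence for the application at hand (for a genuinely non-antisymmetric preorder the statement would have to be read up to $\fsim$-equivalence). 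Once $z \in X$ is secured, its maximality in $X$ is immediate because $X \subseteq Y$ and $z$ is already maximal in the larger set $Y$.
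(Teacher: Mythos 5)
Your proof is correct and follows the same overall route as the paper's: both inclusions are obtained by covering each state with a $\fsimstrict$-maximal state that simulates it and invoking clause~1 of Definition~\ref{def:simu} to transfer edges. But your execution is more careful than the paper's on exactly the two points where the paper's argument is shaky, so the differences are worth recording. For the inclusion $\MaxElt{\fsim}{\Post{\MaxElt{\fsim}{B}}}\subseteq\MaxElt{\fsim}{\Post{B}}$, the paper simply asserts that $\MaxEltname{\fsim}$ is ``monotonic wrt set inclusion,'' which is false as a general claim (enlarging a set can strictly dominate, and hence remove, a previously maximal element); your argument instead uses the correct hypothesis --- that every element of the larger set $\Post{B}$ is $\fsim$-dominated by an element of the smaller set $\Post{\MaxElt{\fsim}{B}}$ --- and derives the inclusion of maximal elements from transitivity of $\fsim$ and $\fsimstrict$. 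For the reverse inclusion, the paper takes $S_2\in\MaxElt{\fsim}{\Post{B}}$, produces $S_4\in\Post{\MaxElt{\fsim}{B}}$ with $S_4\fsim S_2$, and shows $S_4$ is maximal there; this establishes only that $S_2$ is \emph{simulated by} a member of $\MaxElt{\fsim}{\Post{\MaxElt{\fsim}{B}}}$, not that $S_2$ itself belongs to it, which is what the stated set equality requires. You correctly identify that closing this gap needs antisymmetry of the preorder (forcing $S_4=S_2$), note that $\idlesim$ is indeed a partial order so the application is unaffected, and flag that for a non-antisymmetric preorder the lemma would only hold up to $\fsim$-equivalence. That caveat is real: one can build a preorder with two distinct but $\fsim$-equivalent successors witnessing failure of the literal equality. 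In short, your proof buys a sound argument for the stated equality (under antisymmetry, which holds here), whereas the paper's proof as written establishes only the weaker ``up to simulation'' version --- which, to be fair, is all that the proof of Lemma~\ref{lemma:antichain} and the correctness of Algorithm~\ref{algo:bfanti} actually need.
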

\begin{proof}
  We first show that
  $\MaxElt{\fsim}{\Post{\MaxElt{\fsim}{B}}}\subseteq\MaxElt{\fsim}{\Post{B}}$. By
  def of $\MaxElt{\fsim}{B}$, we know that $\MaxElt{\fsim}{B}\subseteq
  A$. Moreover, $\mathsf{Succ}$ and $\MaxEltname{\fsim}$ are monotonic wrt
  set inclusion. Hence:
  $$
  \begin{array}{cl}
    &\MaxElt{\fsim}{B}\subseteq B\\
    \Rightarrow&\Post{\MaxElt{\fsim}{B}}\subseteq \Post{B}\\
    \Rightarrow&\MaxElt{\fsim}{\Post{\MaxElt{\fsim}{B}}}\subseteq \MaxElt{\fsim}{\Post{B}}
  \end{array}
  $$

  Then, we show that
  $\MaxElt{\fsim}{\Post{\MaxElt{\fsim}{B}}}\supseteq\MaxElt{\fsim}{\Post{B}}$. Let
  $S_2$ be a state in $\MaxElt{\fsim}{\Post{B}}$. Let $S_1\in B$ be a
  state s.t.\ $(S_1,S_2)\in E$. Since, $S_2\in\Post{B}$, $S_1$ always
  exists. Since $S_1\in B$, there exists $S_3\in\MaxElt{\fsim}{B}$
  s.t.\ $S_3\fsim S_1$. By Definition~\ref{def:simu}, there is
  $S_4\in\Post{\MaxElt{\fsim}{B}}$ s.t.\ $S_4\fsim S_2$. To conclude,
  let us show \textit{per absurdum} that $S_4$ is maximal in
  $\Post{\MaxElt{\fsim}{B}}$. Assume there exists
  $S_5\in\Post{\MaxElt{\fsim}{B}}$ s.t.\ $S_5\fsimstrict S_4$. Since
  $\MaxElt{\fsim}{B}\subseteq A$, $S_5$ is in $\Post{B}$
  too. Moreover, since $S_4\fsim S_2$ and $S_5\fsimstrict S_4$, we
  conclude that $S_5\fsimstrict S_2$. Thus, there is, in $\Post{B}$
  and element $S_5\fsimstrict S_2$. This contradict our hypothesis
  that $S_2\in\MaxElt{\fsim}{\Post{B}}$.
\end{proof}

    \begin{figure*}[t!h!]
    \begin{center}
    \emph{Induction hypotesis} we assume
    that $\tR_{k-1}=\MaxElt{\fsim}{R_k}$. Then:
    \[
    \begin{array}{cll}
      &\tR_k\\
      =&\MaxElt{\fsim}{\tR_{k-1}\cup\Post{\tR_{k-1}}}&\textrm{By def.}\\
      =&\MaxElt{\fsim}{\MaxElt{\fsim}{\tR_{k-1}}\cup\MaxElt{\fsim}{\Post{\tR_{k-1}}}}&\textrm{by~(\ref{eq:cup})}\\
      =&\MaxElt{\fsim}{\MaxElt{\fsim}{\MaxElt{\fsim}{R_{k-1}}}}\cup\MaxElt{\fsim}{\Post{\MaxElt{\fsim}{R_{k-1}}}}&\textrm{By I.H.}\\
      =&\MaxElt{\fsim}{\MaxElt{\fsim}{R_{k-1}}\cup\MaxElt{\fsim}{\Post{R_{k-1}}}}&\textrm{By Lemma~\ref{lem:post}}\\
      =&\MaxElt{\fsim}{R_{k-1}\cup\Post{R_{k-1}}}&\textrm{By~(\ref{eq:cup})}\\
      =&\MaxElt{\fsim}{R_k}&\textrm{By def.}
    \end{array}
    \]
    \caption{Inductive case for Lemma \ref{lemma:antichain}.}
    \label{fig:proof}
    \end{center}
    \end{figure*}

Then, we are ready to show that:
\begin{lemma}
\label{lemma:antichain}
  Let $A$ be an automaton and let $\fsim$ be a simulation relation for
  $A$. Let $R_0,R_1,\ldots$ and $\tR_0,\tR_1,\ldots$ denote
  respectively the sequence of sets computed by
  Algorithm~\ref{algo:bf} and Algorithm~\ref{algo:bfanti} on
  $A$. Then, for all $i\geq 0$: $\tR_i=\MaxElt{\fsim}{R_i}$.
\end{lemma}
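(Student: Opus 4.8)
The plan is to prove the statement $\tR_i = \MaxElt{\fsim}{R_i}$ for all $i \geq 0$ by induction on $i$, using Lemma~\ref{lem:post} as the crucial algebraic tool. First I would establish the base case $i = 0$: both algorithms initialize $R_0 = \tR_0 = \{S_0\}$, and since a singleton set contains no element strictly simulated by another element of the same set, $\MaxElt{\fsim}{\{S_0\}} = \{S_0\}$, so $\tR_0 = \MaxElt{\fsim}{R_0}$ holds trivially.

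For the inductive step, I assume the induction hypothesis $\tR_{k-1} = \MaxElt{\fsim}{R_{k-1}}$ and aim to show $\tR_k = \MaxElt{\fsim}{R_k}$. The strategy is to unfold the definitions of the two sequences: Algorithm~\ref{algo:bfanti} computes $\tR_k = \MaxElt{\fsim}{\tR_{k-1} \cup \Post{\tR_{k-1}}}$, while Algorithm~\ref{algo:bf} computes $R_k = R_{k-1} \cup \Post{R_{k-1}}$. The core of the argument is a chain of equalities, as displayed in Figure~\ref{fig:proof}, that rewrites $\tR_k$ step by step into $\MaxElt{\fsim}{R_k}$. I would substitute the induction hypothesis to replace occurrences of $\tR_{k-1}$ with $\MaxElt{\fsim}{R_{k-1}}$, then apply Lemma~\ref{lem:post} to collapse $\MaxElt{\fsim}{\Post{\MaxElt{\fsim}{R_{k-1}}}}$ into $\MaxElt{\fsim}{\Post{R_{k-1}}}$, thereby eliminating the inner $\MaxEltname{\fsim}$ applications and recovering the expression for $\MaxElt{\fsim}{R_k}$.

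This argument depends on an auxiliary distributivity property for $\MaxEltname{\fsim}$ over unions, namely that $\MaxElt{\fsim}{X \cup Y} = \MaxElt{\fsim}{\MaxElt{\fsim}{X} \cup \MaxElt{\fsim}{Y}}$ and the related idempotence $\MaxElt{\fsim}{\MaxElt{\fsim}{X}} = \MaxElt{\fsim}{X}$. In the proof sketch these are invoked under the reference label \texttt{eq:cup}, so I would either state and prove this lemma separately beforehand or verify it inline; it follows from the observation that removing a non-maximal element from a set does not change which elements are globally maximal, since any element dominating the removed one is still present. The main obstacle is thus Lemma~\ref{lem:post} itself, which has already been established, together with ensuring these set-theoretic manipulations of $\MaxEltname{\fsim}$ are handled rigorously; once those are in hand, the inductive step is a purely formal rewriting.

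I should note that the displayed calculation in Figure~\ref{fig:proof} appears to start from the hypothesis $\tR_{k-1} = \MaxElt{\fsim}{R_k}$, which looks like a typographical slip for $\MaxElt{\fsim}{R_{k-1}}$; in carrying out the proof I would use the correct index throughout so that the final line genuinely yields $\MaxElt{\fsim}{R_k}$. With the base case and the inductive step both verified, the principle of induction gives $\tR_i = \MaxElt{\fsim}{R_i}$ for all $i \geq 0$, completing the proof.
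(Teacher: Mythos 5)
Your proposal is correct and follows essentially the same route as the paper: induction on $i$, a trivial singleton base case, and an inductive step that rewrites $\tR_k$ into $\MaxElt{\fsim}{R_k}$ via the identity $\MaxElt{\fsim}{B\cup C}=\MaxElt{\fsim}{\MaxElt{\fsim}{B}\cup\MaxElt{\fsim}{C}}$ together with Lemma~\ref{lem:post}. Your observations that the union/idempotence identities deserve explicit justification and that the induction hypothesis displayed in Figure~\ref{fig:proof} contains an index typo (it should read $\tR_{k-1}=\MaxElt{\fsim}{R_{k-1}}$) are both accurate refinements of the paper's argument rather than departures from it.
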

\begin{proof}
  The proof is by induction on $i$. We first observe that for any pair
  of sets $B$ and $C$, the following holds: 
  \begin{eqnarray}
    &\begin{array}{cl}
      &\MaxElt{\fsim}{B\cup
        C}\\
      =&\MaxElt{\fsim}{\MaxElt{\fsim}{B}\cup \MaxElt{\fsim}{C}}
    \end{array}
    \label{eq:cup}
  \end{eqnarray}

  \begin{description}
  \item[Base case $i=0$] Clearly, $\MaxElt{\fsim}{R_0}=R_0$ since
    $R_0$ is a singleton. By definition $\tR_0=R_0$.
   \item[Inductive case $i=k$] See Figure~\ref{fig:proof}.
    
  \end{description}
\end{proof}

\paragraph{Acknowledgment.} We thank Phan Hiep Tuan for identifying mistakes in Definition~\ref{def:simu} and Theorem~\ref{theorem:idlesim}.

%
%
%
%
%
%
%
%
%
%
%
%

\bibliographystyle{latex8}
\bibliography{antichains}

\end{document}